\documentclass[11pt,a4paper,reqno]{amsproc}
\usepackage{amsmath, amssymb, graphicx}
\usepackage{mathtools}
\DeclarePairedDelimiter{\norm}{\lVert}{\rVert}
\usepackage{csquotes}
\usepackage{adjustbox}
\usepackage{soul}
\usepackage{xcolor}
\usepackage{tikz}
\usepackage{float}
\usepackage{float}
\usepackage{booktabs}
\usepackage{tcolorbox}
\graphicspath{{Diagrams/}}
\usepackage[
top=1in,
bottom=1in,
left=1in,
right=1in
]{geometry}
\usepackage{listings}
\usepackage{caption}
\usepackage{hyperref}
\usepackage{float}
\usepackage{fancyhdr}

\usepackage{makecell}
\usetikzlibrary{shapes.geometric, arrows}
\tikzstyle{startstop} = [rectangle, rounded corners, minimum width=3cm, minimum height=1cm,text centered, draw=black, fill=white]
\tikzstyle{process} = [rectangle, minimum width=3cm, minimum height=1cm, text centered, draw=black, fill=white]
\tikzstyle{decision} = [diamond, minimum width=3cm, minimum height=1cm, text centered, draw=black, fill=white]
\tikzstyle{arrow} = [thick,->,>=stealth]
\usepackage{algorithm}
\usepackage{algpseudocode}
\newtheorem{theorem}{Theorem}

\title{Comparative Analysis of Compliance-Matrix Induced Norms in Structural Topology Optimization}
\author{Jyotiranjan Nayak$^1$}
\email{jyotiranjan\_n@srmap.edu.in}
\author{Shafeequdheen P$^1$}
\email{shafeequdheen\_p@srmap.edu.in}
\author{Vijayakrishna Rowthu$^1$}
\email{vijayakrishna.r@srmap.edu.in}

\address{$^1$Department of Mathematics, SRM University AP, Neerukonda, Mangalagiri Mandal, Mangalagiri, Andhra Pradesh 522240, India}
\subjclass[2020]{74P10, 74P15, 65N30, 74P20}

\begin{document}
	
	\keywords{Topology Optimization; Optimality Criteria Method; Finite Element Method; Strain Energy Density.}	
	\maketitle
	
\begin{abstract}
	\noindent
	Compliance minimization is a central objective in structural topology optimization, commonly interpreted as the total strain energy of a system. In this work, we examine the influence of alternative compliance formulations based on different norm representations of structural energy. Specifically, we consider three formulations: the classical quadratic compliance, its square-root form corresponding to an $\ell_2$ norm, and a spectral $\ell_1$-norm based formulation derived from the stiffness weighted displacement field. Although these formulations arise from the same stiffness displacement relationship, they generate markedly different optimization landscapes and result in distinct structural topologies. Numerical results indicate that the classical formulation produces well-distributed load paths, whereas the $\ell_1$-based formulation promotes sparse and highly localized structural members. These findings underscore the critical role of objective function selection in topology optimization and offer insights into alternative formulations for achieving tailored structural performance.
\end{abstract}
	
\section{Introduction}

Topology optimization has become a fundamental computational tool for determining optimal material distributions in structures subject to prescribed loads and boundary conditions. Among the various formulations, compliance minimization remains the most widely used objective due to its clear physical interpretation as total strain energy and its favorable numerical properties. Classical compliance is defined as a quadratic form involving the global stiffness matrix and the displacement vector. This formulation results in smooth optimization landscapes and well defined sensitivity expressions, which facilitate efficient numerical implementation. Despite these advantages, alternative formulations of compliance can be derived by interpreting strain energy within different normed spaces. In particular, the square root of the quadratic compliance can be interpreted as an $\ell_2$ norm in the stiffness-weighted displacement space. Furthermore, by exploiting the spectral decomposition of the stiffness matrix, an $\ell_1$-norm-based formulation can be constructed. These alternative definitions alter the geometric characteristics of the objective function and, consequently, influence the distribution of sensitivities throughout the optimization process. From a computational perspective, the dissemination and adoption of topology optimization methods have been significantly influenced by compact and accessible implementations. A notable example is the 99-line MATLAB \cite{MATLAB:2022b} code developed by Ole Sigmund \cite{sigmund200199}, which provides a minimal yet complete framework for compliance based topology optimization. Owing to its clarity, compactness, and ease of modification, this code has become a standard reference in both teaching and research, enabling rapid prototyping and testing of new ideas. In topology optimization, the most commonly used algorithms include the optimality criteria method (OCM) \cite{bendsoe1995optimization}, the method of moving asymptotes (MMA) \cite{svanberg1987method}, and sequential linear programming (SLP) \cite{dunning2015introducing}, owing to their simplicity and ease of implementation. A variant of the OCM, known as the generalized optimality criteria method (GOCM), was introduced by the authors in \cite{kim2021generalized}. In this approach, the volume fraction constraint is not strictly enforced at every iteration of the optimization process; however, it offers the advantage of handling multiple constraints more effectively, albeit with reduced computational efficiency.   

Building on this foundation, several researchers have proposed extensions and modifications to the original code to incorporate improved filtering techniques \cite{lazarov2011filters, bourdin2001filters, wang2008hybrid}, alternative material interpolation schemes \cite{bendsoe1999material, stolpe2001alternative}, and enhanced numerical stability \cite{jog1996stability, waszczyszyn2013stability}. Notable developments include the extension to a 88-line code with improved efficiency and filtering strategies \cite{andreassen2011efficient}, as well as further adaptations addressing issues such as mesh-independency, projection methods, and multi-physics formulations. These contributions demonstrate the flexibility of the original framework and its suitability as a baseline for methodological advancements.

In the present work, this established computational framework is further extended to incorporate alternative compliance measures based on $\ell_2$ and $\ell_1$ norms. While retaining the core finite element formulation and filtering approach, the objective function and corresponding sensitivity analyses are modified to reflect these alternative norm-based interpretations. This enables a consistent and controlled comparison between the classical quadratic compliance and its variants, thereby isolating the effect of norm selection on the resulting optimized topologies. The primary objective of this study is to investigate the influence of these alternative compliance measures on structural designs. By systematically comparing the classical formulation with its $\ell_2$ and $\ell_1$ counterparts, this work aims to provide deeper insight into how the choice of norm impacts sensitivity distribution and, ultimately, topology optimization outcomes.

The remainder of this paper is organized as follows. Section 2 presents the necessary preliminaries, including the governing equations of linear elasticity in their strong and weak forms, along with the formulation of the topology optimization problem based on the Solid Isotropic Material with Penalization (SIMP) approach. Section 3 introduces the modified formulations adopted in this work, detailing the incorporation of alternative compliance measures based on $\ell_2$ and $\ell_1$ norms and their corresponding sensitivity analyses. Section 4 provides the numerical results and discussion, where the proposed approaches are evaluated and compared against the classical formulation. Finally, Section 5 summarizes the main findings and presents the conclusions of the study.
	
\section{Preliminaries}	
The strong form of the  linear elasticity problem seeks a displacement field $\mathbf{u} : \Omega \to \mathbb{R}^d$ satisfying the following boundary value problem (see \cite{ern2004theory}):
\begin{align}
	\left\{
	\begin{aligned}
		-\nabla \cdot \sigma (\mathbf{u}) &= \mathbf{f}, && \text{in } \Omega, \\
		\mathbf{u} &= \mathbf{0}, && \text{on } \Gamma_D, \\
		\sigma(\mathbf{u}) \cdot \hat{n} &= \mathbf{g}, && \text{on } \Gamma_N,
	\end{aligned}
	\right.
	\label{1.1}
\end{align}
where $\mathbf{f}$ is the body force per unit volume, $\mathbf{g}$ is the prescribed surface traction, and $\hat{n}$ is the outward unit normal to the boundary. Further the constitutive equations are given by

\begin{align}
	\left\{
	\begin{aligned}
		\sigma(\mathbf{u}) &= A \varepsilon(\mathbf{u}) = \lambda\, \text{tr}(\boldsymbol{\varepsilon}(\mathbf{u}))\, \mathbf{I} + 2\mu\, \boldsymbol{\varepsilon}(\mathbf{u}), \\
		\boldsymbol{\varepsilon}(\mathbf{u}) &= \frac{1}{2} \left( \nabla \mathbf{u} + \nabla \mathbf{u}^\top \right).
	\end{aligned}
	\right.
	\label{1.2}
\end{align}
The stress tensor $\boldsymbol{\sigma}$ and the linearized strain tensor $\boldsymbol{\varepsilon}$ are related through Hooke’s law for isotropic materials (see \cite{johnson2009numerical}),
where $\lambda$ and $\mu$ are the Lamé parameters, $\text{tr}(\cdot)$ denotes the trace operator, and $\mathbf{I}$ is the identity tensor. The elasticity tensor $A$ is used in more general anisotropic formulations (see \cite{ciarlet1997mathematical}).

\subsection{Weak formulation}
Assuming the solution $\mathbf{u}$ is made up of a linear combination of some test functions from
\( V = \{ \mathbf{v} \in [H^1(\Omega)]^d : \mathbf{v}|_{\Gamma_D} = \mathbf{0} \} \). The weak form of the linear elasticity problem can be compactly written as the variational equation
\begin{equation}
	\int_\Omega \sigma(\textbf{u}): \varepsilon(\textbf{v}) = \int_{\Omega} \mathbf{f} \cdot \textbf{v} dx + \int_{\Gamma_N} \mathbf{g} \cdot \textbf{v} ds.
	\label{eq:main}
\end{equation}
\begin{equation}
	a(\textbf{u}, \textbf{v}) = L(\textbf{v})
	\label{eq:auvl}
\end{equation}
where the bilinear form $a(\cdot,\cdot)$ and the linear form $L(\cdot)$ are defined as in equation~\eqref{eq:a} and ~\eqref{eq:l} respectively.
\begin{equation}
	\begin{aligned}
		a(\mathbf{u}, \mathbf{v}) 
		&= \int_\Omega \sigma(\mathbf{u}) : \varepsilon(\mathbf{v}) \, dx = \int_{\Omega} \lambda (\nabla \cdot \mathbf{u})(\nabla \cdot \mathbf{v}) \, dx \quad + \int_{\Omega} 2\mu \boldsymbol{\varepsilon}(\mathbf{u}) 
		: \boldsymbol{\varepsilon}(\mathbf{v}) \, dx
	\end{aligned}
	\label{eq:a}
\end{equation}

\begin{equation}
	L(\textbf{v}) = \int_{\Omega} \mathbf{f} \cdot \textbf{v} dx + \int_{\Gamma_N} \mathbf{g} \cdot \textbf{v} ds.
	\label{eq:l}
\end{equation}

\noindent
Setting \(\textbf{u} = \sum_{j=1}^{N} C_j \phi_j\) and \(\textbf{v} = \phi_i\), where $\{\phi_i\}_{i=1}^{N}$
are the nodal basis functions that span the finite-dimensional space \(V_h \subset V\) and equation~\eqref{eq:auvl} can be written as system of linear equations by using equation~\eqref{1.2} as follows
\begin{equation}
	\begin{aligned}
		\sum_{j=1}^{N} \Big[ 
		&\lambda \int_{\Omega} (\nabla \cdot \phi_i)(\nabla \cdot \phi_j) \, dx + 2\mu \int_{\Omega} \boldsymbol{\varepsilon}(\phi_i) 
		: \boldsymbol{\varepsilon}(\phi_j) \, dx 
		\Big] C_j = \int_{\Omega} \mathbf{f} \cdot \phi_i \, dx 
		+ \int_{\Gamma_N} \mathbf{g} \cdot \phi_i \, ds
	\end{aligned}
	\label{eq:weak-9}
\end{equation}
\noindent
Further, the equation~\eqref{eq:weak-9} can be represented in the matrix form as  \( \mathbf{K}\mathbf{U} = \mathbf{F} \), where, \( K_{ij} = a(\phi_i, \phi_j)\), \(F_i = L(\phi_i) \) and \(\mathbf{U}\) consists of the unknowns \( C_j \), \( 1 \leq i, j \leq n \). The well-posedness and proof details can be found in Emma Cinatl \cite{cinatl2018finite}.

\subsection{SIMP for Topology Optimization Problem}

A widely used method for topology optimization is the SIMP (Solid Isotropic Material with Penalization) approach, originally proposed by Bends{\o}e \cite{Bendsoee1989} and later formalized with Sigmund \cite{Bendsoee2001}. In the SIMP method, each finite element is assigned a design variable $x_e \in [0,1]$, representing its relative material density.

The topology optimization problem is formulated as the minimization of compliance (i.e., maximization of structural stiffness) subject to a volume constraint:
\begin{align}
	\left\{
	\begin{aligned}
		\min_{\mathbf{x}} \quad & C(\mathbf{x}) = \mathbf{U}^\top \mathbf{K}(\mathbf{x}) \mathbf{U} 
		= \sum_{e=1}^{N} x_e^p\, \mathbf{u}_e^\top \mathbf{K}_e \mathbf{u}_e \\
		\text{subject to} \quad 
		& \dfrac{V(\mathbf{x})}{V_0} = f, \\
		& \mathbf{K}(\mathbf{x}) \mathbf{U} = \mathbf{F}, \\
		& x_{\min} \leq x_e \leq 1, \quad e = 1, \ldots, N.
	\end{aligned}
	\right.
	\label{eq:simp}
\end{align}

\noindent
Here, $\mathbf{x} = \{x_1, x_2, \ldots, x_N\}^\top$ is the vector of design variables. The total material volume is given by $V(\mathbf{x}) = \sum_{e=1}^{N} x_e v_e$, where $v_e$ denotes the volume of element $e$, and $V_0$ is the volume of the full design domain. The volume fraction $f \in (0,1)$ specifies the allowable material usage. The lower bound $x_{\min}$ is introduced to ensure numerical stability and to avoid singularities in the global stiffness matrix $\mathbf{K} \in \mathbb{R}^{n \times n}$, where $\mathbf{U}$ and $\mathbf{F}$ denote the global displacement and force vectors, respectively.

\section{Modified SIMP Model}

In addition to the classical quadratic compliance, alternative formulations can be constructed by introducing different norm definitions in appropriately transformed displacement spaces. In the standard SIMP framework, the element compliance is expressed as a quadratic form of the displacement field, which can be equivalently interpreted as the squared $\ell_2$ norm of a stiffness weighted displacement vector. This observation naturally motivates a broader class of compliance measures obtained by replacing the quadratic form with general vector norms in the transformed space. While these formulations retain the physical interpretation of strain energy, they modify the geometric structure of the objective function particularly its scaling, convexity, and sensitivity distribution which in turn influences the optimization trajectory and resulting material layout. In this section, two such compliance measures based on $\ell_2$ and $\ell_1$ norms are presented.
\subsection{$\ell_2$-Norm Compliance}

An alternative compliance measure can be obtained by taking the square root of the classical quadratic compliance and the optimization modifies into the following. 
 
\begin{align}
	\left\{
	\begin{aligned}
		\min_{\mathbf{x}} \quad & 	C_{2}(x) = \sum_{e=1}^{N} x_e^p\,\sqrt{\mathbf{u}_e^\top \mathbf{K}_e \mathbf{u}_e}.\\
		\text{subject to} \quad & \dfrac{V(\mathbf{x})}{V_0} = f,\\ & \mathbf{K}(\mathbf{x}) \mathbf{U} = \mathbf{F}, \\ & 0 \leq x_{\min} \leq x_e \leq 1, \quad e = 1, \ldots, N.
	\end{aligned}
	\right.
	\label{eq:simpl2}
\end{align}
\noindent
This expression admits a natural interpretation as an $\ell_2$ norm in a stiffness-weighted displacement space:
\begin{equation}
	C_{2} = \norm{\mathbf{K}^{\frac{1}{2}} \mathbf{U}}_2.
\end{equation}
\noindent
Although $C_{2}$ is a monotonic transformation of the classical compliance, it introduces a nonlinear scaling that alters the relative magnitude of sensitivities. As a result, while the optimal solution in an ideal continuous setting remains unchanged, the optimization trajectory and convergence characteristics may differ in practical implementations.

\subsection{$\ell_1$-Norm Compliance}

A fundamentally different formulation can be derived by considering a spectral decomposition of the element stiffness matrix. For each element, the stiffness matrix can be expressed as:
\begin{equation}
	\mathbf{K}_e = \mathbf{V} \mathbf{D} \mathbf{V}^\top,
\end{equation}
where $\mathbf{V}$ is the matrix of eigenvectors and $\mathbf{D}$ is a diagonal matrix containing the corresponding eigenvalues. The square root of the element stiffness matrix is then given by:
\begin{equation}
	\mathbf{K}_e^{\frac{1}{2}} = \mathbf{V} \mathbf{D}^{\frac{1}{2}} \mathbf{V}^\top.
\end{equation}

Defining a transformed displacement vector at the element level as 
\begin{equation}
	\mathbf{w}_e = \mathbf{K}_e^{\frac{1}{2}} \mathbf{U}_e,
\end{equation}
the corresponding compliance measure can be written as ~\eqref{eq:siml2} with the same constraints as described in ~\eqref{eq:simp} and ~\eqref{eq:simpl2}.
\begin{equation}
	C_{1} = \sum_{e} \|\mathbf{w}_e\|_1 = \sum_{e} \sum_{i} |w_{e,i}|.
	\label{eq:siml2}
\end{equation}
\noindent
In the present implementation, the matrix square root is computed via eigenvalue decomposition of the element stiffness matrix. It has been noticed that eigen values of the local stiffness matrix exhibits very small and negative values in the classical SIMP. To maintain the local stiffness matrices as positive definite, pre-conditioning \cite{saad2011numerical} has been done where the eigenvalues smaller than a prescribed threshold $\varepsilon$ are truncated to zero prior to taking the square root. The transformed displacement vector is then evaluated using the resulting square root matrix, and the element-wise $\ell_1$-norm is obtained by summing the absolute values of its components. The procedure is summarized below:

\begin{algorithm}[H]
	\caption{Element-wise computation of $\ell_1$-norm compliance}
	\begin{algorithmic}[1]
		
		\For{each element $e$}
		\State $[\mathbf{V}, \mathbf{D}] = \mathrm{eig}(\mathbf{K}_e)$
		\State $D_{ii} \gets 0 \;\; \text{if } |D_{ii}| < \varepsilon$
		
		\State $\mathbf{K}_e^{\frac{1}{2}} = \mathbf{V} \mathbf{D}^{\frac{1}{2}} \mathbf{V}^\top$
		\State $\mathbf{w}_e = \mathbf{K}_e^{\frac{1}{2}} \mathbf{U}_e$
		\State $C_e = \|\mathbf{w}_e\|_1$
		\EndFor
		
		\State $C_{1} = \sum_e C_e$
		
	\end{algorithmic}
\end{algorithm}

The following theorem establishes the equivalence of the $\ell_1$ and $\ell_2$ norms in finite-dimensional spaces, which forms the basis for relating different compliance measures.
\begin{theorem}[Equivalence of $\ell_1$ and $\ell_2$ norms in $\mathbb{R}^n$]
	Let $\mathbf{w}_e \in \mathbb{R}^n$. Then the $\ell_1$ and $\ell_2$ norms are equivalent; in particular, the following inequalities hold:
	\begin{equation}
		\|\mathbf{w}_e\|_2 \;\le\; \|\mathbf{w}_e\|_1 \;\le\; \sqrt{n}\,\|\mathbf{w}_e\|_2.
	\end{equation}
	Consequently, there exist positive constants $c_1 = 1$ and $c_2 = \sqrt{n}$ such that
	\begin{equation}
		c_1 \|\mathbf{w}_e\|_2 \;\le\; \|\mathbf{w}_e\|_1 \;\le\; c_2 \|\mathbf{w}_e\|_2, \quad \forall \mathbf{w}_e \in \mathbb{R}^n.
	\end{equation}
\end{theorem}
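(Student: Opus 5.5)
I will prove the two inequalities separately and directly from the definitions $\|\mathbf{w}_e\|_1=\sum_{i=1}^n |w_{e,i}|$ and $\|\mathbf{w}_e\|_2=\bigl(\sum_{i=1}^n w_{e,i}^2\bigr)^{1/2}$. Both sides are nonnegative and homogeneous of degree one. The case $\mathbf{w}_e=\mathbf{0}$ is trivial, so I will assume $\mathbf{w}_e\neq \mathbf{0}$ whenever that is convenient.

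For the lower bound $\|\mathbf{w}_e\|_2\le\|\mathbf{w}_e\|_1$, I will square the right-hand side and expand:
\begin{equation*}
\|\mathbf{w}_e\|_1^2=\Bigl(\sum_{i}|w_{e,i}|\Bigr)^2=\sum_i w_{e,i}^2+\sum_{i\neq j}|w_{e,i}|\,|w_{e,j}|\;\ge\;\sum_i w_{e,i}^2=\|\mathbf{w}_e\|_2^2 .
\end{equation*}
The inequality holds because every cross term is nonnegative. Since both norms are nonnegative, taking square roots preserves the inequality, which gives the claim with $c_1=1$. Equality occurs exactly when at most one component is nonzero, so the constant $c_1=1$ cannot be improved.

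For the upper bound I will use the Cauchy--Schwarz inequality in $\mathbb{R}^n$. The vectors to pair are $\mathbf{a}=(|w_{e,1}|,\dots,|w_{e,n}|)$ and $\mathbf{1}=(1,\dots,1)$. Then
\begin{equation*}
\|\mathbf{w}_e\|_1=\mathbf{a}\cdot\mathbf{1}\le \|\mathbf{a}\|_2\,\|\mathbf{1}\|_2=\sqrt{n}\,\|\mathbf{w}_e\|_2 ,
\end{equation*}
using $\|\mathbf{a}\|_2=\|\mathbf{w}_e\|_2$. Equality holds when all $|w_{e,i}|$ are equal, so $c_2=\sqrt{n}$ is sharp. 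The final ``consequently'' statement follows immediately by naming the constants $c_1=1$ and $c_2=\sqrt n$, both positive.

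No step here is a real obstacle. The only point needing care is to apply Cauchy--Schwarz to the vector of absolute values rather than to $\mathbf{w}_e$ itself. I may also remark afterwards that summing these elementwise bounds over $e$ relates $C_1$ to $\sum_e\|\mathbf{w}_e\|_2$. In that use, $n$ is the number of element degrees of freedom, so the constant is independent of the mesh size.
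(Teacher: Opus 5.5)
Your proof is correct and follows essentially the same route as the paper. It uses Cauchy--Schwarz against the all-ones vector for $\|\mathbf{w}_e\|_1\le\sqrt{n}\,\|\mathbf{w}_e\|_2$, and the squared inequality $\sum_i w_{e,i}^2\le\bigl(\sum_i|w_{e,i}|\bigr)^2$ for the lower bound; your expansion into nonnegative cross terms supplies the justification the paper leaves implicit, and your sharpness remarks are a correct addition.
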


\begin{proof}
	The proof follows from standard inequalities between vector norms in finite-dimensional spaces (see, e.g., Kreyszig~\cite{kreyszig1991introductory}).
	
	First, by the Cauchy--Schwarz inequality,
	\begin{equation}
		\|\mathbf{w}_e\|_1 = \sum_{i=1}^n |w_{e,i}| 
		\le \sqrt{n} \left( \sum_{i=1}^n w_{e,i}^2 \right)^{1/2} 
		= \sqrt{n}\,\|\mathbf{w}_e\|_2.
	\end{equation}
	
	Next, we use the inequality
	\begin{equation}
		\sum_{i=1}^n w_{e,i}^2 \le \left( \sum_{i=1}^n |w_{e,i}| \right)^2,
	\end{equation}
	which implies
	\begin{equation}
		\|\mathbf{w}_e\|_2 \le \|\mathbf{w}_e\|_1.
	\end{equation}
	
	Combining the two inequalities yields
	\begin{equation}
		\|\mathbf{w}_e\|_2 \le \|\mathbf{w}_e\|_1 \le \sqrt{n}\,\|\mathbf{w}_e\|_2,
	\end{equation}
	which completes the proof.
\end{proof}

\subsection{Modal Sensitivity and Mechanical Relevance}
	To further interpret the effect of the proposed norm-based compliance measures, consider the spectral decomposition of the element stiffness matrix:
	\begin{equation}
		\mathbf{K}_e = \mathbf{V} \mathbf{D} \mathbf{V}^\top,
	\end{equation}
	where $\mathbf{D} = \mathrm{diag}(\lambda_i)$ contains the eigenvalues and $\mathbf{V}$ the corresponding eigenvectors. Defining the modal displacement vector $\tilde{\mathbf{u}}_e = \mathbf{V}^\top \mathbf{U}_e$, the transformed variable becomes
	\begin{equation}
		\mathbf{w}_e = \mathbf{K}_e^{\frac{1}{2}} \mathbf{U}_e = \mathbf{V} \mathbf{D}^{\frac{1}{2}} \tilde{\mathbf{u}}_e,
	\end{equation}
	so that each component satisfies
	\begin{equation}
		w_{e,i} = \sqrt{\lambda_i}\,\tilde{u}_{e,i}.
	\end{equation}
	
	In this representation, the classical compliance per element can be written as
	\begin{equation}
		C(e) = x_e^p \sum_i \lambda_{e,i} \tilde{u}_{e,i}^2,
	\end{equation}
	which corresponds to a quadratic aggregation of modal strain energies. The associated sensitivity with respect to the displacement field is
	\begin{equation}
		\frac{\partial C(e)}{\partial \mathbf{U}_e} = 2 x_e^p \mathbf{K}_e \mathbf{U}_e = 2 x_e^p \mathbf{V} \mathbf{D} \tilde{\mathbf{u}}_e,
	\end{equation}
	indicating that each modal contribution is weighted proportionally to its eigenvalue $\lambda_i$. Consequently, stiffer modes dominate the response, leading to a distributed utilization of deformation modes.
	
	For the $\ell_2$-norm compliance,
	\begin{equation}
		C_{2}(e) = x_e^p \left( \sum_i \lambda_i \tilde{u}_{e,i}^2 \right)^{\frac{1}{2}},
	\end{equation}
	the sensitivity becomes
	\begin{equation}
		\frac{\partial C_{2}(e)(e)}{\partial \mathbf{U}_e}
		= x_e^p \frac{\mathbf{K}_e \mathbf{U}_e}{\|\mathbf{w}_e\|_2}
		= x_e^p \mathbf{V} \frac{\mathbf{D} \tilde{\mathbf{u}}_e}{\|\mathbf{w}_e\|_2}.
	\end{equation}
	Here, the normalization by $\|\mathbf{w}_e\|_2$ reduces the relative dominance of large modal contributions, resulting in a more balanced participation of deformation modes.
	
	In contrast, the $\ell_1$-norm compliance is given by
	\begin{equation}
		C_{1}(e) = x_e^p \sum_i \sqrt{\lambda_i}\,|\tilde{u}_{e,i}|,
	\end{equation}
	with sensitivity
	\begin{equation}
		\frac{\partial C_{1}(e)}{\partial \mathbf{U}_e}
		= x_e^p \mathbf{K}_e^{1/2} \,\mathrm{sign}(\mathbf{w}_e)
		= x_e^p \mathbf{V} \mathbf{D}^{\frac{1}{2}} \,\mathrm{sign}(\mathbf{w}_e).
	\end{equation}
	In this case, the contribution of each mode scales with $\sqrt{\lambda_i}$ and depends only on the sign of the modal amplitude, rather than its magnitude. This significantly weakens the influence of large eigenvalues and promotes sparsity in the modal response. The above relations reveal that the choice of norm directly controls how strain energy is distributed across the eigenmodes of the structure. The classical quadratic compliance enforces a fully distributed energy participation, where all modes contribute proportionally to their stiffness. The $\ell_2$-norm formulation moderates this effect by reducing the dominance of high-energy modes, leading to a more uniform stress and deformation field. In contrast, the $\ell_1$-norm formulation promotes a sparse modal representation, in which only a few dominant deformation modes carry significant energy. From a structural perspective, this transition can be interpreted as a shift from continuum-like behavior to more localized load transfer mechanisms. While the classical formulation favors diffuse material layouts, the $\ell_1$-based compliance encourages the formation of distinct load paths, often resulting in truss-like topologies. Thus, although all formulations are equivalent in a norm sense, they induce fundamentally different mechanical responses through their influence on modal energy distribution.

The overall optimization procedure follows the classical 99-line MATLAB\cite{MATLAB:2022b} code developed by Ole Sigmund\cite{sigmund200199}, with the only modification being in the compliance calculation and its corresponding sensitivity. For clarity, a schematic representation of the methodology adopted in this model is provided in Figure~\ref{fig:pseudocode_flowchart}, along with the associated pseudo-code describing the optimization loop.

\begin{figure}[htbp]
	\centering
	\begin{minipage}[htbp]{0.48\textwidth} 
		\begin{tcolorbox}[colback=gray!5,colframe=black!50,
			title=\textbf{Pseudo-code for the loop}, width=\textwidth,
			left=2mm, right=2mm, top=1mm, bottom=1mm,
			boxsep=1mm, halign=left]
			\hspace*{1em}~Initialize $x$ with volfrac for each element;\\ \hspace*{1em}~ $\texttt{loop}=0$, $\texttt{rchange}=1$\\
			~\textbf{while} $\texttt{rchange}>0.01$ \textbf{do}\\
			\hspace*{1em}~$\texttt{loop}\leftarrow\texttt{loop}+1$;  $x_{\text{old}}\leftarrow x$\\
			\hspace*{1em}~Compute displacement $\mathbf{U}\leftarrow\text{FEAnalysis}$\\
			\hspace*{1em}~$c\leftarrow0$\\
			\hspace*{1em}~\textbf{for} each element $e$ \textbf{do}\\
			\hspace*{2em}Compute $\mathbf{K}_e$, $\mathbf{U}_e$, 
			$\text{comp}\!\leftarrow\!\mathbf{U}_e^T\mathbf{K}_e\mathbf{U}_e$\\
			\hspace*{2em}$c\!\leftarrow\!c+(x_e)^{\texttt{penal}}\text{comp}$; \\ 
			\hspace*{2em}$dc_e\!\leftarrow\!-\texttt{penal}(x_e)^{\texttt{penal}-1}\text{comp}$\\
			\hspace*{1em}~\textbf{end for}\\
			\hspace*{1em}~$dc\leftarrow\text{Sensitivity Filter}$; \\
			\hspace*{1em}~$x\leftarrow\text{Optimality Criteria Update}$\\
			\hspace*{1em}~$\texttt{rchange}\leftarrow
			\dfrac{\max(|x-x_{\text{old}}|)}{\max(x_{\text{old}})}$;  \\
			\hspace*{1em}~$\text{Plot the Updated Design}$\\
			~\textbf{end while}
		\end{tcolorbox}
	\end{minipage}
	\hfill
	\begin{minipage}[htbp]{0.48\textwidth} 
		\centering
		\resizebox{\textwidth}{!}{ 
			\begin{tikzpicture}[node distance=1.5cm, scale=0.35, every node/.style={scale=1}]
				\node (start) [startstop] {\textbf{Start}};
				\node (domain) [process, below of=start, align=center] {%
					\textbf{Domain Initialization}\\
					(Nodes, Elements, Material Properties)};
				\node (BC) [process, below of=domain, align=center]{\textbf{FE Analysis}\\ (Apply BCs, Calc $K_e$, Solve $KU = F$)};
				\node (objective) [process, below of= BC] {\textbf{Compute Objective} $C(x_e)$};
				\node (sensitivity) [process, below of=objective] {\textbf{Compute Sensitivities} $\dfrac{\partial C}{\partial x_e}$};
				\node (filtering) [process, below of=sensitivity] {\textbf{Apply Filter}};
				\node (update) [process, below of=filtering] {\textbf{Update Design}};
				\node (plot) [process, right of=update, xshift = 5cm] {\textbf{Plot Updated Design}};
				\node (loop) [decision, below of=update, yshift=-1cm] {\textbf{Converged?}};
				\node (end) [startstop, below of=loop, yshift=-1.2cm] {\textbf{End}};
				\draw [arrow] (start) -- (domain);
				\draw [arrow] (domain) -- (BC);
				\draw [arrow] (BC) -- (objective);
				\draw [arrow] (objective) -- (sensitivity);
				\draw [arrow] (sensitivity) -- (filtering);
				\draw [arrow] (filtering) -- (update);
				\draw [arrow] (update) -- (plot);
				\draw [arrow] (update) -- (loop);
				\draw [arrow] (loop.south) -- ++(0,-0.1) node[midway, right] {Yes} -- (end.north);
				\draw [arrow] (loop.west) -- ++(-11,0) coordinate(tmp1)
				-- ++(0,24.3) coordinate(tmp2)
				-- (BC.west) node[midway, above left] {No};
			\end{tikzpicture}
		}
	\end{minipage}
	\caption{Left: Pseudocode; Right: Flowchart of the SIMP method}
	\label{fig:pseudocode_flowchart}
\end{figure}

This formulation represents an $\ell_1$ norm in the transformed space and differs fundamentally from the quadratic and $\ell_2$-based measures. In contrast to energy-based aggregation, the $\ell_1$ norm promotes sparsity in the transformed displacement components, effectively emphasizing dominant deformation modes while suppressing smaller contributions.

\section{Results and Discussion}

To assess the effectiveness of the proposed formulations, a series of standard topology optimization benchmark problems were considered. For each case, the initial design domain, boundary conditions, and loading configurations were defined according to conventional benchmark settings. The performance of the three compliance formulations classical quadratic, $\ell_2$-norm, and $\ell_1$-norm was evaluated on these problems, including the bridge structure (Fig.~\ref{fig:bridge_comparison}), the cantilever beam with single loading (Fig.~\ref{fig:cant_comparison}) and dual loading conditions (Fig.~\ref{fig:twoforce_comparison}), and the well-known MBB beam (Figs.~\ref{fig:HalfMBB_comparison} and \ref{fig:FullMBB_comparison}). All simulations were conducted under identical material properties, loading conditions, and volume constraints to ensure a fair comparison among the formulations.

\begin{figure}[htbp]
	\centering
	
	\begin{minipage}{0.48\textwidth}
		\centering
		\begin{tikzpicture}[baseline={(current bounding box.north)}, scale=1.0]
			\fill[gray!50] (0,0) rectangle (4,2);
			\draw[thick] (0,0) rectangle (4,2);
			\filldraw[black] (0,0) circle (0.05);
			\filldraw[black] (4,0) circle (0.05);
			\filldraw[red] (2,0) circle (0.07);
			\draw[very thick, ->, red] (2,-0.2) -- (2,-0.6);
			\node[below] at (2,-0.6) {\footnotesize Force};
			\filldraw[black] (0,0) -- (-0.2,-0.15) -- (0.2,-0.15) -- cycle;
			\filldraw[black] (4,0) -- (3.8,-0.15) -- (4.2,-0.15) -- cycle;
			\node[below] at (0,-0.3) {\footnotesize Fixed};
			\draw[thick] (-0.35,-0.15) -- (0.35,-0.15);
			\node[below] at (4,-0.3) {\footnotesize Fixed};
			\draw[thick] (3.65,-0.15) -- (4.35,-0.15);
		\end{tikzpicture}
		
		{\textbf{Bridge Structure}}
	\end{minipage}
	\hfill
	\begin{minipage}{0.48\textwidth}
		\centering
		\begin{tikzpicture}[baseline={(current bounding box.north)}, scale=1.0]
			\fill[gray!50] (0,0) rectangle (4,2);
			\draw[thick] (0,0) rectangle (4,2);
			\foreach \y in {0.1,0.3,...,1.9}
			\draw[black] (-0.2,\y) -- (0,\y+0.1);
			\filldraw[red] (4,0) circle (0.05);
			\draw[very thick, ->, red] (4.2,0) -- (4.2,-0.4);
			\node[below] at (4.3,-0.4) {\footnotesize Force};
			\node[rotate=90] at (-0.5,1) {\footnotesize Fixed Nodes};
		\end{tikzpicture}
		
		{\textbf{Cantilever beam with one force}}
	\end{minipage}
	
	\caption{Initial bridge and cantilever beam setups. $E=1$, $\nu=0.3$, $volfrac=0.5$, $penal=3$, $\vec{F}=(0,-1)$.}
	\label{fig:bridge_cantilever}
\end{figure}
\noindent
The force node and the fixed nodes are in the following order for the bridge structure.
\begin{lstlisting}
	F(2*(nelx/2+1)*(nely+1),1) = -1;	
	Fixeddofs = union([2*(nely+1)-1:2*(nely+1),...
	                     2*(nelx+1)*(nely+1)-1:2*(nelx+1)*(nely+1)]);
\end{lstlisting}

\begin{figure}[htbp]
	\centering
	\setlength{\tabcolsep}{2pt}   
	\renewcommand{\arraystretch}{1}
	
	\begin{tabular}{|c|c|c|}
		\hline
		\textbf{Quadratic Compliance} & \textbf{$\ell_2$-Norm Compliance} & \textbf{$\ell_1$-Norm Compliance} \\
		\hline
		\includegraphics[width=0.3\textwidth,height=4.5cm,keepaspectratio]{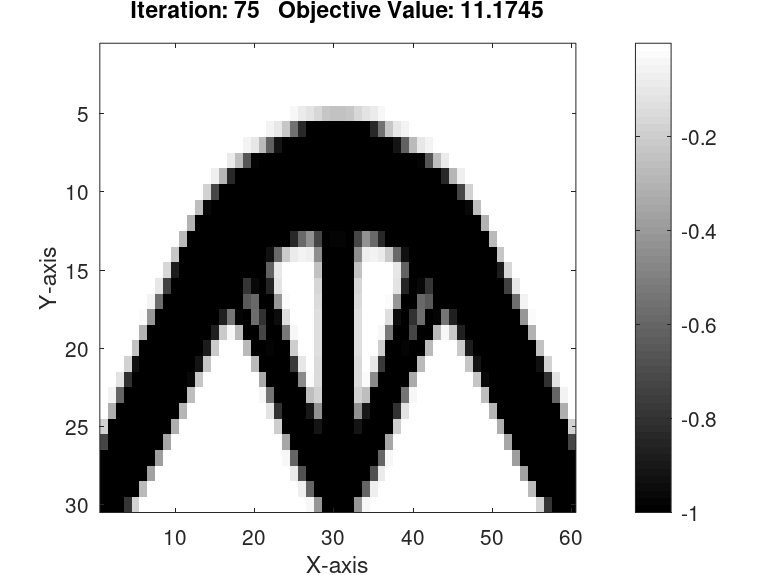} &
		\includegraphics[width=0.3\textwidth,height=4.5cm,keepaspectratio]{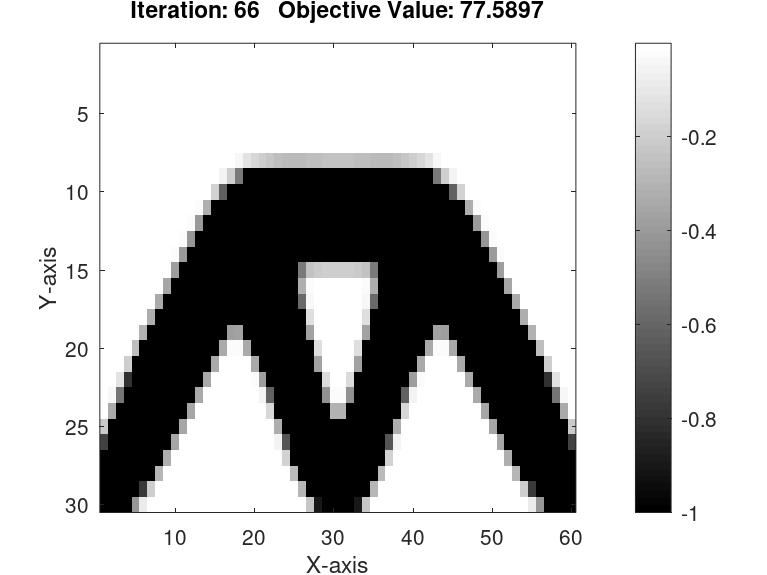} & 
		\includegraphics[width=0.3\textwidth,height=4.5cm,keepaspectratio]{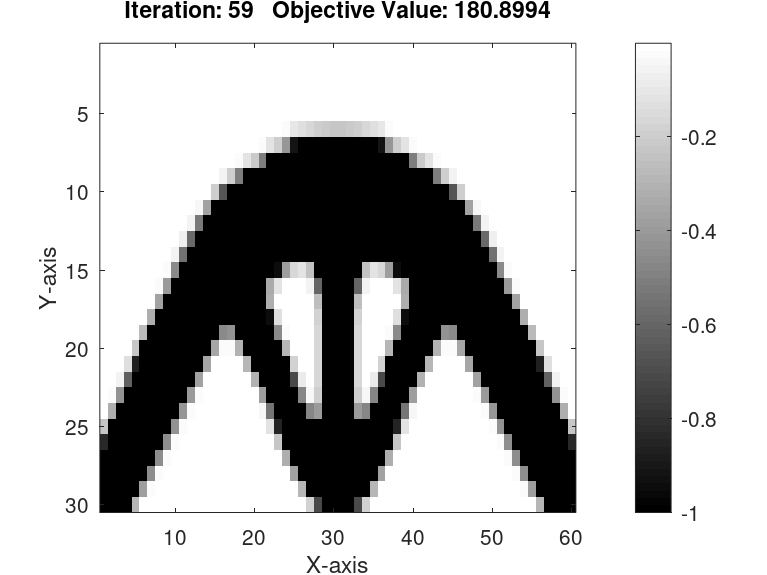} \\
		\hline
	\end{tabular}
	
	\caption{Comparison of optimized topologies for the Bridge structure by different compliance.}
	\label{fig:bridge_comparison}
\end{figure}
\noindent
The force node and the fixed nodes are in the following order for the cantilever beam with one force. Fig.~\ref{fig:bridge_cantilever} (right) and Fig.~\ref{fig:cant_comparison} represents the initial and optimized domain respectively.
\begin{lstlisting}
	F(2*(nelx+1)*(nely+1),1) = -1;
	Fixeddofs = [1:2*(nely+1)];	
\end{lstlisting}
\begin{figure}[htbp]
	\centering
	\setlength{\tabcolsep}{4pt}
	\renewcommand{\arraystretch}{1}
	\begin{tabular}{|c|c|c|}
		\hline
		\textbf{Quadratic Compliance} & \textbf{$\ell_2$-Norm Compliance} & \textbf{$\ell_1$-Norm Compliance} \\
		\hline
		\includegraphics[width=0.3\textwidth,height=4.3cm,keepaspectratio]{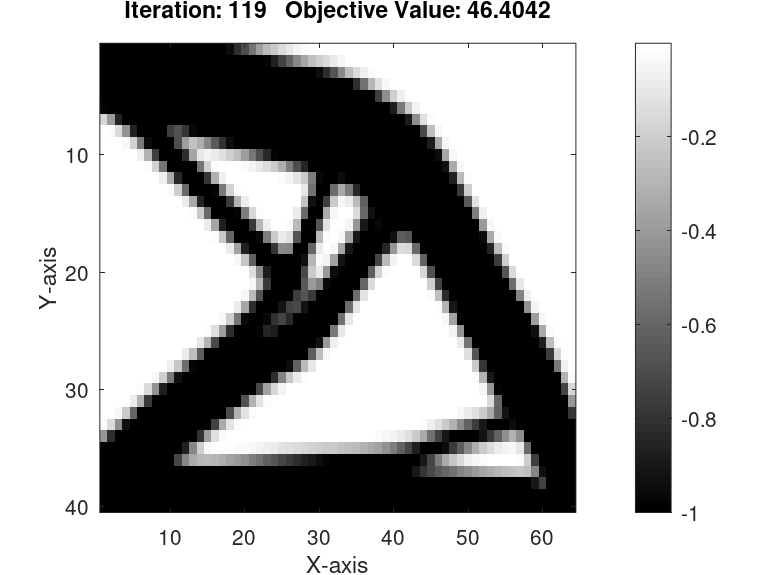} & 
		\includegraphics[width=0.3\textwidth,height=4.3cm,keepaspectratio]{C_SQRT_Comp.png} & 
		\includegraphics[width=0.3\textwidth,height=4.3cm,keepaspectratio]{C_SQRT_KE.png} \\
		\hline
	\end{tabular}

\caption{Comparison of optimized topologies for the Cantilever beam by different compliance.}

\label{fig:cant_comparison}
\end{figure}

\noindent
\begin{minipage}{0.52\textwidth}
The force node and the fixed nodes are in the following order for the cantilever beam with two forces (upward force at the top east corner, downward force at the bottom east corner), see Fig.~\ref{fig:can_comparison}.
	
{\raggedright
	\begin{lstlisting}
		F(2*(nelx+1)*(nely+1),1) = -1;
		F(2*(nelx)*(nely+1)+2, 2) = 1;
		Fixeddofs = [1:2*(nely+1)];
	\end{lstlisting}
}
	
\end{minipage}
\hfill
\begin{minipage}{0.44\textwidth}
	\centering
	\begin{tikzpicture}[scale=0.7]
		\fill[gray!50] (0,0) rectangle (4,4);
		\draw[thick] (0,0) rectangle (4,4);
		
		\foreach \y in {0.1,0.3,...,3.9}
		\draw[black] (-0.2,\y) -- (0,\y+0.1);
		
		\filldraw[red] (4,0) circle (0.05);
		\draw[very thick, ->, red] (4.2,0) -- (4.2,-0.4);
		\node[below] at (4.3,-0.4) {\footnotesize Force};
		
		\node[rotate=90] at (-0.5,2) {\footnotesize Fixed Nodes};
		
		\filldraw[red] (4,4) circle (0.05);
		\draw[very thick, ->, red] (4.2,4.0) -- (4.2,4.4);
		\node[above] at (4.8,4.0) {\footnotesize Force};
	\end{tikzpicture}
	
	\captionof{figure}{\textbf{Cantilever Beam with two forces}}
	\label{fig:can_comparison}
\end{minipage}

\begin{figure}[htbp]
	\centering
	\setlength{\tabcolsep}{4pt}
	\renewcommand{\arraystretch}{1}
	\begin{tabular}{|c|c|c|}
		\hline
		\textbf{Quadratic Compliance} & \textbf{$\ell_2$-Norm Compliance} & \textbf{$\ell_1$-Norm Compliance} \\
		\hline
		\includegraphics[width=0.3\textwidth,height=4.3cm,keepaspectratio]{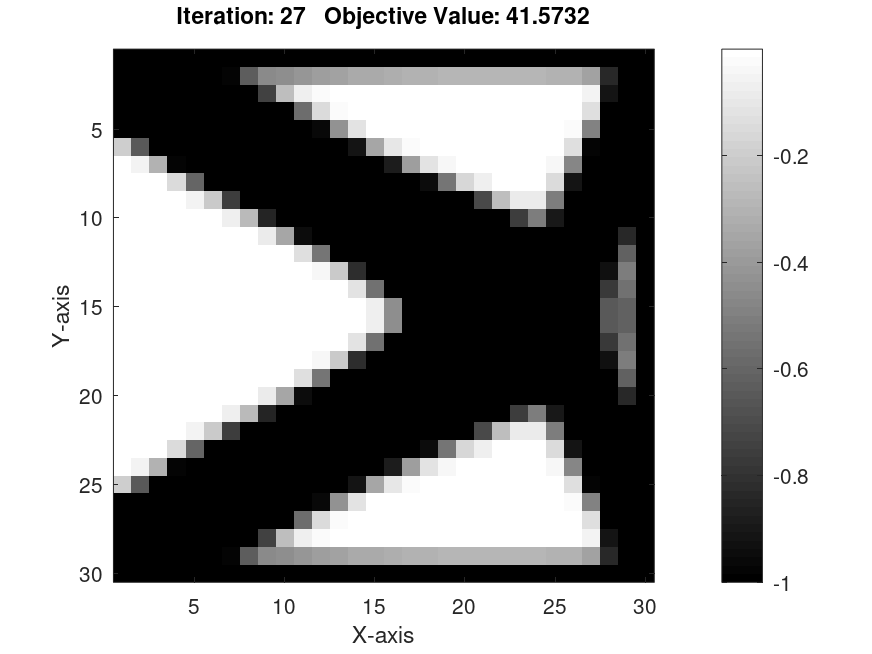} & 
		\includegraphics[width=0.3\textwidth,height=4.3cm,keepaspectratio]{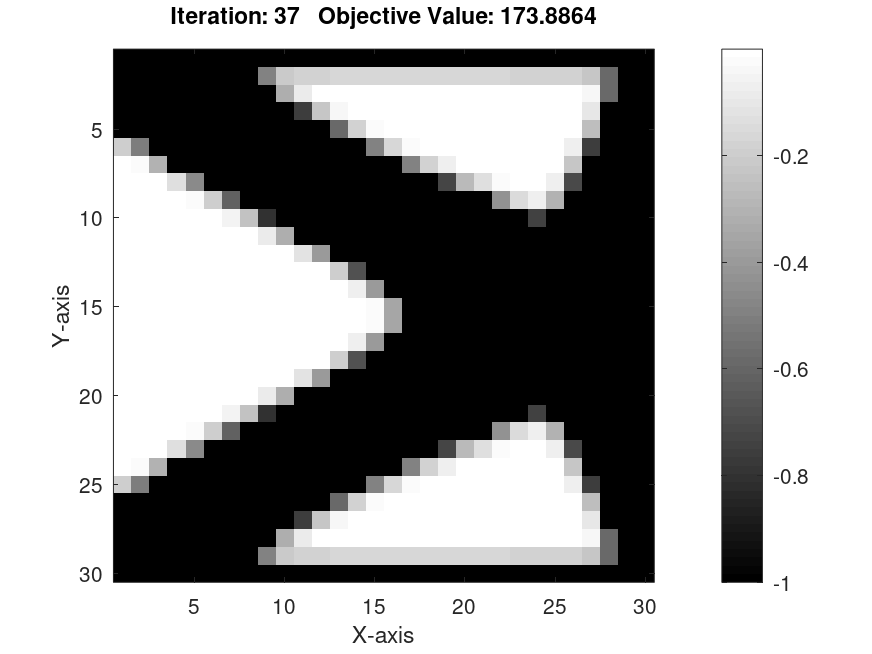} & 
		\includegraphics[width=0.3\textwidth,height=4.3cm,keepaspectratio]{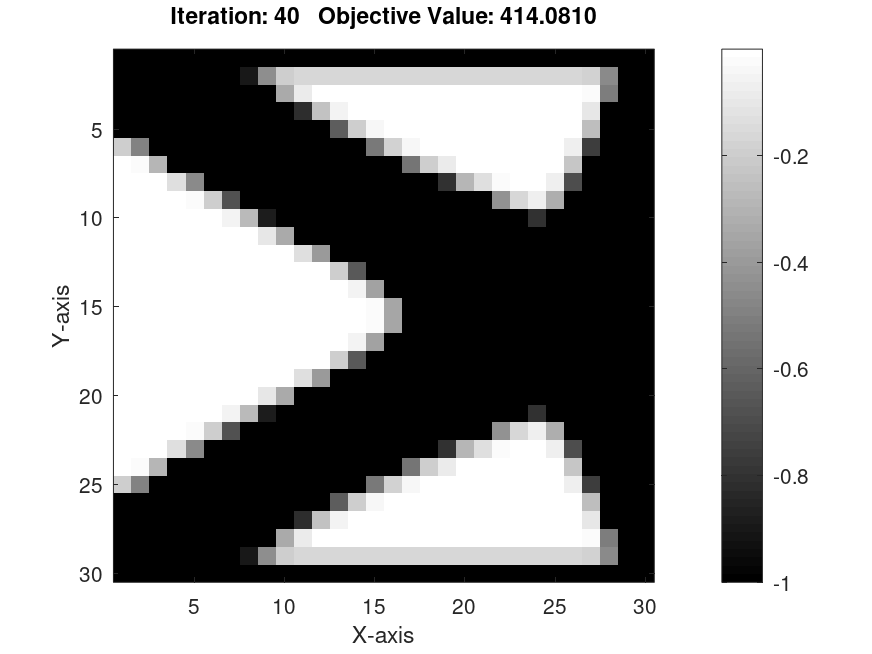} \\
		\hline
	\end{tabular}
	
	\caption{Comparison of optimized topologies for the Cantilever beam with two forces}
	
	\label{fig:twoforce_comparison}
\end{figure}

%
%
%
%
%
%
%
%
%
%
%
%
\noindent
The force node and the fixed nodes are in the following order for the half MBB-beam.
\begin{lstlisting}
	F(2,1) = -1;
	Fixeddofs = union([1:2:2*(nely+1)], [2*(nelx+1)*(nely+1)]);
\end{lstlisting}

\begin{figure}[htbp]
	\centering
	
	\begin{minipage}{0.48\textwidth}
		\centering
\begin{tikzpicture}[scale=1.5]
	
	\fill[gray!30] (0,0) rectangle (3,2);
	\draw[thick] (0,0) rectangle (3,2);
	
	\draw[very thick] (0,0) -- (0,2);
	
	\foreach \y in {0.3,0.7,1.1,1.5,1.9} {
		\draw[thick] (-0.15,\y) -- (0,\y);
	}
	
	\node[rotate=90] at (-0.5,1) {\footnotesize ($u_x=0$)};
	
	\filldraw[black] (3,0) -- (2.75,-0.2) -- (3.25,-0.2) -- cycle;
	\draw[thick] (2.85,-0.25) circle (0.05);
	\draw[thick] (3.15,-0.25) circle (0.05);
	\draw[thick] (2.6,-0.35) -- (3.4,-0.35);
	
	\draw[->, red, very thick] (0,2.3) -- (0,2);
	\node[red] at (0,2.6) {\footnotesize Load};
	
	\node[below] at (3,-0.5) {\footnotesize Rolled};	
	
\end{tikzpicture}
		\vspace{-0.5cm}
		\begin{center}
			(a)
		\end{center}
		\vspace{0.3cm}
		
		\includegraphics[width=\textwidth,height=3.5cm]{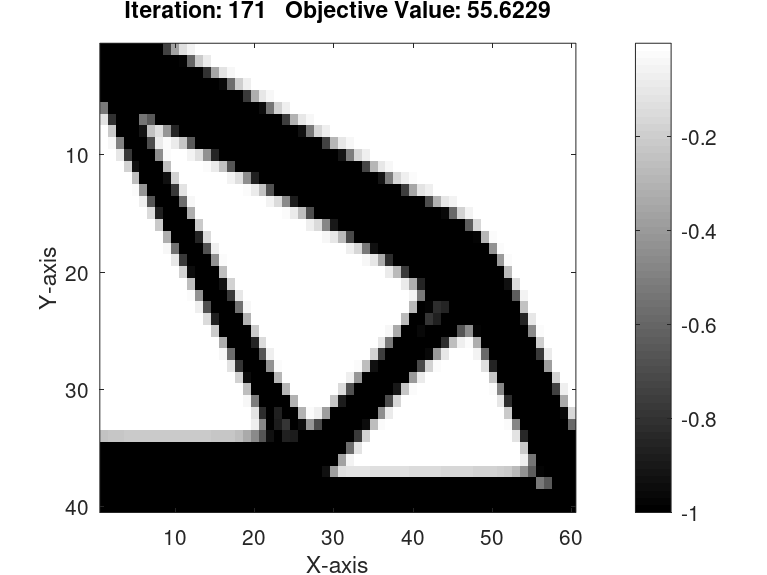}
		\begin{center}
			(b)
		\end{center}
	\end{minipage}
	\hfill
	\begin{minipage}{0.48\textwidth}
		\vspace{0.8cm}
		\centering
		
		\includegraphics[width=\textwidth,height=3.5cm]{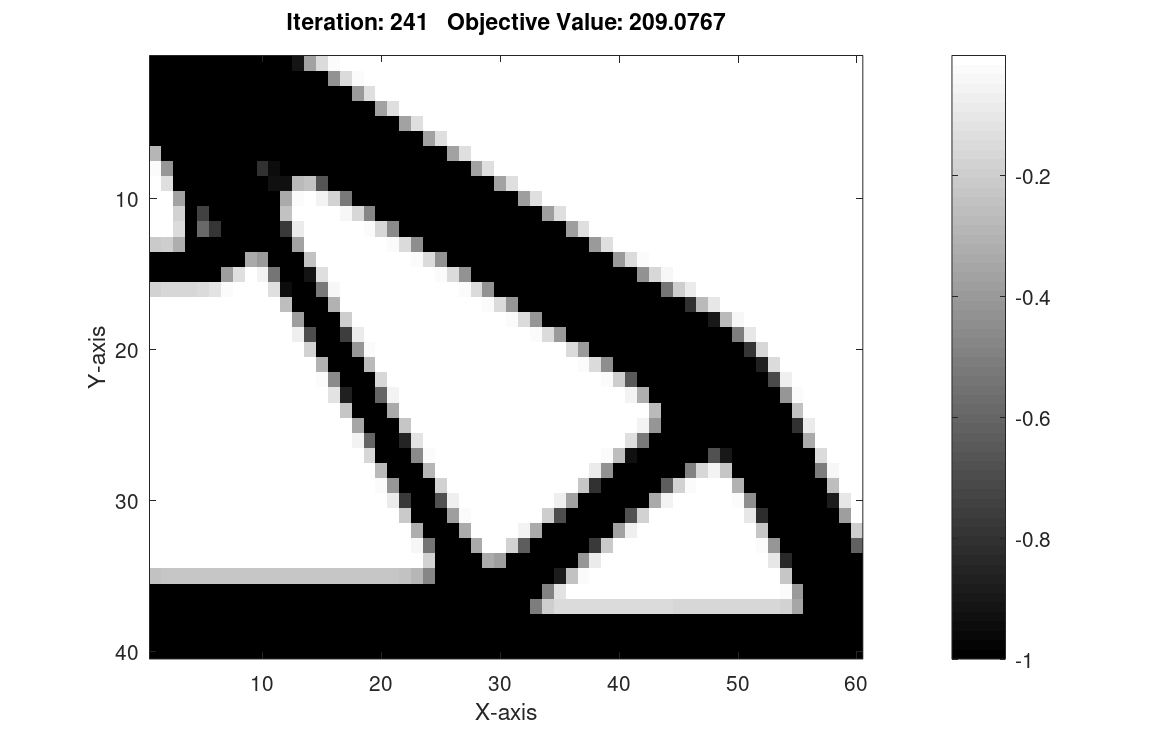}
		\begin{center}
			(c)
		\end{center}
		\vspace{1cm}
		
		\includegraphics[width=\textwidth,height=3.5cm]{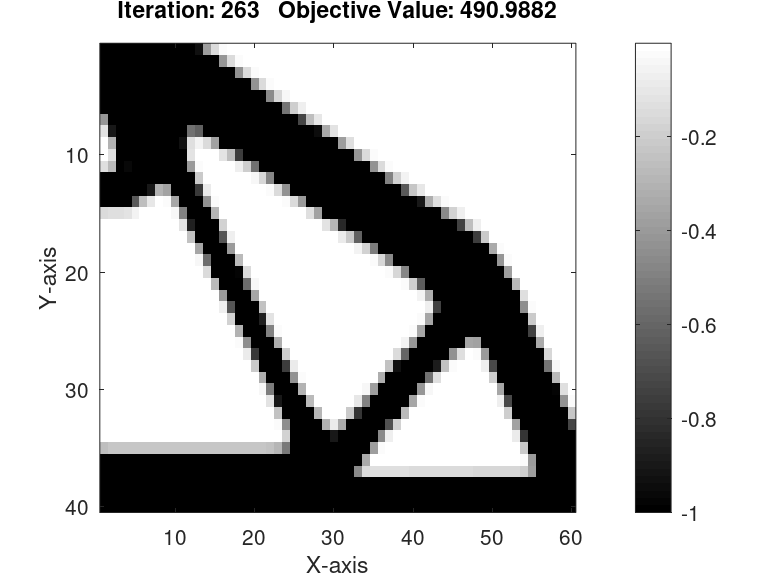}
		\begin{center}
			(d)
		\end{center}
		
	\end{minipage}
	
	\caption{Comparison of optimized topologies for the half MBB-beam by different compliance. (a) initial half MBB-beam with boundary conditions, (b) optimized full MBB-beam by classical SIMP, (c) $\ell_2$-norm compliance, (d) $\ell_1$-norm compliance.}
	\label{fig:HalfMBB_comparison}
	
\end{figure}

The force node and the fixed nodes are in the following order for the full MBB-beam and the optimized shapes are given in the Fig.~\ref{fig:FullMBB_comparison}.
\begin{lstlisting}
	top_middle = (nelx/2)*(nely+1)+1 ;
	F(2*top_middle) = -1;
	Fixeddofs = union([2*(nely+1)], [2*(nelx+1)*(nely+1)-1:2*(nelx+1)*(nely+1)]);
\end{lstlisting}

\begin{figure}[htbp]
	\centering
	
\begin{minipage}{0.48\textwidth}
	\centering
	\begin{tikzpicture}[scale=1.0]
		
		\fill[gray!30] (0,0) rectangle (6,2);
		\draw[thick] (0,0) rectangle (6,2);
		
		\filldraw[black] (0,0) -- (-0.2,-0.2) -- (0.2,-0.2) -- cycle;
		\draw[thick] (-0.3,-0.3) -- (0.3,-0.3);
		\draw[thick] (-0.12,-0.25) circle (0.05);
		\draw[thick] (0.12,-0.25) circle (0.05);	
		\node[below] at (0,-0.4) {\footnotesize Rolled};
		
		\filldraw[black] (6,0) -- (5.8,-0.2) -- (6.2,-0.2) -- cycle;
		\draw[thick] (5.7,-0.2) -- (6.3,-0.2);
		\node[below] at (6,-0.5) {\footnotesize Fixed};
		
		\draw[->, red, very thick] (3,2.3) -- (3,2);
		\node[red] at (3,2.6) {\footnotesize Load};
		\vspace{-0.5cm}	
	\end{tikzpicture}
       \begin{center}
	    	(a)
    	\end{center}
	\vspace{0.3cm}
	
	\includegraphics[width=9.5cm,height=3.5cm]{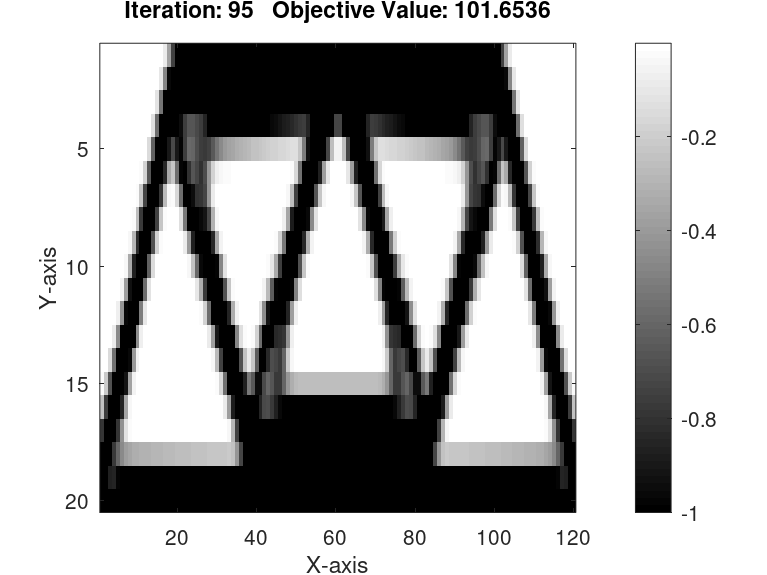}
	\begin{center}
		(b)
	\end{center}
\end{minipage}
	\hfill
	\begin{minipage}{0.48\textwidth}
		\vspace{0.5cm}
		\centering
		
		\includegraphics[width=9.5cm,height=3.5cm]{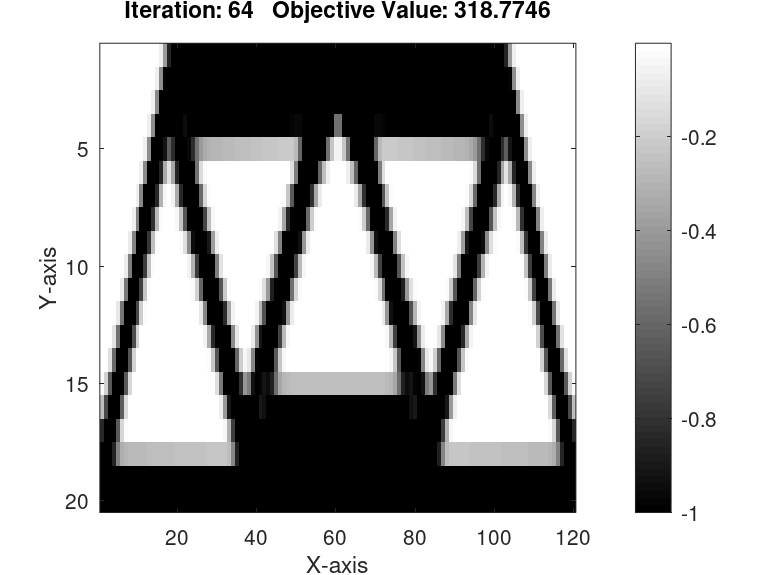}
		\begin{center}
			(c)
		\end{center}
		\vspace{0.35cm}
		
		\includegraphics[width=9.5cm,height=3.5cm]{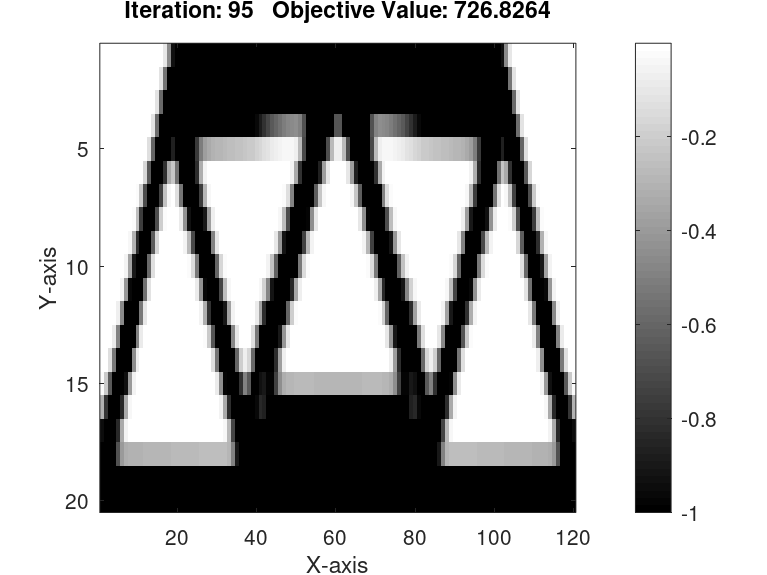}
		\begin{center}
			(d)
		\end{center}
		
	\end{minipage}
	
	\caption{Comparison of optimized topologies for the Full MBB-beam by different compliance. (a) initial full MBB-beam with boundary conditions, (b) optimized full MBB-beam by classical SIMP, (c) $\ell_2$-norm compliance, (d) $\ell_1$-norm compliance.}
	\label{fig:FullMBB_comparison}
	
\end{figure}

The following table presents a comparison of these compliance measures in terms of their mathematical form, sensitivity behavior, and impact on optimization outcomes.

\begin{table}[htbp]
	\centering
	\caption{Comparison of compliance formulations}
	\label{tab:compliance_comparison}
	\begin{tabular}{llll}
		\toprule
		\textbf{Feature} & \textbf{Quadratic Compliance} & $\ell_2$-Norm & $\ell_1$-Norm \\
		\midrule
		Mathematical form & $U^\top K U$ & $\sqrt{U^\top K U}$ & $\|K^{1/2}U\|_1$ \\
		Nature & Smooth, quadratic & Scaled quadratic & Non-smooth \\
		Sensitivity behavior & Well-distributed & Moderately scaled & Sparse/localized \\
		Topology characteristics & Distributed members & Intermediate behavior & Sparse structures \\
		Optimization stability & High & Moderate & Lower \\

		\bottomrule
	\end{tabular}
\end{table}

The classical quadratic compliance formulation produces smooth and well-connected structural layouts with distributed load paths. This behavior is consistent with its interpretation as total strain energy, where contributions from all deformation modes are accumulated in a quadratic manner. As a result, the optimization process favors designs that evenly distribute stress throughout the domain, leading to stable and physically intuitive topologies.

The $\ell_2$-norm compliance, defined as the square root of the quadratic form, yields topologies that are qualitatively similar to those obtained from the classical formulation. Since this transformation is monotonic, the overall structural patterns remain largely unchanged. However, subtle differences can be observed in the thickness and distribution of members. These variations arise due to the rescaling of sensitivities, which slightly alters the optimization trajectory while preserving the fundamental characteristics of the solution. In contrast, the $\ell_1$-norm compliance formulation produces markedly different topologies characterized by sparse and highly localized structural members. The resulting designs exhibit fewer but more pronounced load-carrying paths, often resembling truss like structures. This behavior can be attributed to the inherent sparsity-promoting property of the $\ell_1$ norm, which penalizes distributed contributions and instead favors concentration in a limited number of dominant deformation modes. Consequently, the material distribution becomes highly non-uniform, with clear emphasis on critical load paths. From an optimization perspective, the quadratic and $\ell_2$ formulations exhibit smooth objective landscapes, which contribute to stable convergence and well-behaved sensitivity fields. In contrast, the $\ell_1$ formulation introduces non-smoothness into the optimization problem, which can lead to increased sensitivity to numerical parameters and potentially slower or less stable convergence. Nevertheless, this formulation provides a useful mechanism for generating structurally efficient designs with minimal material usage along principal load paths.

Overall, the results demonstrate that although the three compliance measures originate from the same physical model, their mathematical structure significantly influences the resulting topologies. The quadratic formulation favors uniform stiffness distribution, the $\ell_2$ formulation maintains similar characteristics with slight variations, and the $\ell_1$ formulation promotes sparsity and localization. These findings highlight the critical role of objective function selection in topology optimization and suggest that alternative norm-based formulations can be effectively used to tailor structural designs to specific performance requirements.

\section{Conclusions}

This work examined the impact of alternative compliance formulations on topology optimization. The classical quadratic and $\ell_2$-norm formulations produced similar, well-distributed structural layouts, while the $\ell_1$-norm formulation led to sparse and highly localized designs. These results highlight that the choice of objective function significantly influences the resulting topology.

\section*{Conflicts of Interest}

On behalf of all authors, the corresponding author declares that there are no conflicts of interest related to the publication of this work.

\section*{Data Availability Statement}
The data presented in this study are available on request from the corresponding author due to some of the data involves privacy.

\section*{Grammar and Readability Disclosure}
The authors used ChatGPT (OpenAI)~\cite{chatgpt2026} for grammar checking and language refinement during manuscript preparation. The authors take full responsibility for the content.

\section*{Acknowledgements}
The authors would like to thank SRM University, Andhra Pradesh for providing the fellowship and lab facilities.

    \bibliographystyle{unsrt}
	\bibliography{reference}

\end{document}